\documentclass[letterpaper, 10 pt, conference]{ieeeconf}  

\IEEEoverridecommandlockouts 

\overrideIEEEmargins 

\pdfminorversion=4          

\usepackage[pdftex, pdfstartview={FitV}, pdfpagelayout={TwoColumnLeft},bookmarksopen=true,plainpages = false, colorlinks=true, linkcolor=black, citecolor = black, urlcolor = black,filecolor=black , pagebackref=false,hypertexnames=false, plainpages=false, pdfpagelabels ]{hyperref}

\usepackage[final]{changes} 
\newcommand{\changed}[1]{#1}


\usepackage{setspace}
\onehalfspacing

\usepackage{graphicx}
\usepackage{epstopdf}
\usepackage{amssymb,amsmath}

\usepackage{amsthm}
\usepackage{cuted}
\usepackage[font=footnotesize]{subcaption}

\usepackage[font=footnotesize]{caption}
\usepackage{xcolor}
\usepackage{units}
\usepackage{algorithm}
\usepackage[noend]{algpseudocode}
\usepackage{balance}
\usepackage[sort,compress,noadjust]{cite}
\usepackage{tabularx}
\usepackage{nameref}
\usepackage{centernot}



\usepackage{enumitem}

\theoremstyle{plain}
\newtheorem{theorem}{Theorem}
\theoremstyle{plain}
\newtheorem{proposition}{Proposition}
\theoremstyle{plain}
\newtheorem{lemma}{Lemma}
\theoremstyle{plain}

\theoremstyle{definition}

\theoremstyle{definition}
\newtheorem{definition}{Definition}
\theoremstyle{remark}
\newtheorem{remark}{Remark}

\usepackage[hang,flushmargin]{footmisc}

\usepackage[capitalize]{cleveref}
\crefformat{equation}{(#2#1#3)}
\Crefformat{equation}{Equation~(#2#1#3)}
\Crefname{equation}{Equation}{Eqs.}

\usepackage{accents}

\title{\LARGE \bf Universal Adaptive Control of Nonlinear Systems}

\author{Brett T. Lopez$^1$ and Jean-Jacques E. Slotine$^2$
\thanks{$^{1}$Versatile Control-Theoretic Robotics Laboratory, University of California -- Los Angeles, Los Angeles, CA, {\tt\small btlopez@ucla.edu}}
\thanks{$^{2}$Nonlinear Systems Laboratory, Massachusetts Institute of Technology, Cambridge, MA, {\tt\small jjs@mit.edu}}}

\begin{document}

\maketitle

\thispagestyle{empty}
\pagestyle{empty}

\begin{abstract}
This work develops a new direct adaptive control framework that extends the certainty equivalence principle to general nonlinear systems with unmatched model uncertainties.
The approach adjusts the rate of adaptation online to eliminate the effects of parameter estimation transients on closed-loop stability. 
The method can be immediately combined with a previously designed or learned feedback policy if a corresponding model-parameterized Lyapunov function or contraction metric is known.
Simulation results of various nonlinear systems with unmatched uncertainties demonstrates the approach.
\end{abstract}



\section{Introduction}
\label{sec:introduction}

Concurrent stabilization and model parameter estimation for uncertain nonlinear systems has long been a focus of the controls community. 
Despite several decades of research, a comprehensive direct adaptive control approach has remained elusive.
The main difficulty in developing such a framework is the idea that the certainty equivalence principle cannot be employed when the model uncertainties are outside the span of the control input, i.e., are unmatched.
Departure from certainty equivalence significantly complicates the design process as the controller must either anticipate or be robust to transients in the parameter estimates.
This work will show that the certainty equivalence principle can be extended to systems with unmatched uncertainties by actively adjusting the rate of adaptation.
The approach only requires the uncertainty be linearly parameterized and imposes no restrictions on the type or structure of the dynamics.
Due to its generality, the approach is referred to as \emph{universal adaptive control}.

Initial work in adaptive control of nonlinear systems used Lyapunov-like stability arguments and the certainty equivalence principle to construct stabilizing adaptive feedback policies for feedback linearized systems with matched uncertainties \cite{slotine1986adaptive,slotine-li1987adaptive,taylor1989adaptive,isidori1989,kanellakopoulos1991extended}, i.e., those that can be directly canceled through control.
The difficulty of extending these early approaches to general systems with unmatched uncertainties led to the development of adaptive backstepping \cite{tsinias1991,kanellakopoulos1991systematic,krstic1992adaptive,tsinias2000backstepping}.
Restricting the system to have a triangular structure enabled the recursive application of the extended matching condition \cite{krstic1995nonlinear} and certainty equivalence.
Combining a control Lyapunov function (clf) with parameter adaptation has also been investigated \cite{krstic1995nonlinear,boffi2021higher} although these methods are often limited to systems with matched uncertainties.
In \cite{krstic1995control}, adaptive clf's (aclf) were proposed for systems with unmatched uncertainties.
\changed{However, computing an aclf is difficult because it entails stabilizing a modified system whose dynamics depend on its own clf.}
The above challenges spurred interest in indirect methods where an identifier is combined with an input-to-state stable (ISS) controller \cite{krstic1995nonlinear}.
\changed{The strict ISS condition is needed as the identifier may not be fast enough to achieve closed-loop stability -- an issue not encountered with direct methods as they employ Lyapunov-like arguments to derive the adaptation law.}
Recently, \cite{boffi2021regret} used random basis functions \cite{sanner1991gaussian,boffi2021random} to approximate unmatched uncertainties as matched but can require extensive parameter tuning as the underlying physics are not exploited. 

The main contribution of this work is a new direct adaptive control framework based on certainty equivalence design for general nonlinear systems with unmatched uncertainties. 
The approach is comprised of two core ideas.
The first is defining the \emph{unmatched control Lyapunov function} which is a family of clf's parameterized over all possible models.
This follows the certainty equivalence philosophy of computing ``infinitely-many" stabilizing clf's instead of just one for all models.
The second is to adjust the adaptation rate online to eliminate the effects of estimation transients on closed-loop stability.
These ideas are further extended by introducing \emph{unmatched control contraction metrics}, building upon~\cite{lopez2020adaptive}, where a differential~\cite{lohmiller1998contraction,manchester2017control} rather than explicit clf is utilized.
\changed{The approach only requires the system be stabilizable or contracting for every parametric variation -- an intuitive criteria easily included in algorithms that compute a clf or contraction metric.}
It can also be immediately combined with analytic or learned controllers with a known model-parameterized clf or contraction metric.
Simulations illustrate the generality and effectiveness of the approach.

\textit{Notation:} Symmetric positive-definite $n\times n$ matrices are denoted as $\mathcal{S}^n_+$.
Positive and strictly-positive scalars are designated as $\mathbb{R}_+$ and $\mathbb{R}_{>0}$ respectively. 
The shorthand notation of a function $T$ parameterized by a vector $a$ with vector argument $s$ is $T_a(s) := T(s;a)$.
The directional derivative of a smooth matrix $M : \mathbb{R}^n \times \mathbb{R} \rightarrow \mathcal{S}^n_+$ along a vector field $v : \mathbb{R}^n \times \mathbb{R} \rightarrow \mathbb{R}^n$ is $\partial_v M(x,t) = \sum_i^n \partial M / \partial x_i \, {v}_i(x,t)$.
A desired trajectory and control input pair is $(x_d,\,u_d)$. 


\section{Problem Formulation}
\label{sec:problem_formulation}
This work addresses control of uncertain dynamical systems of the form
\begin{equation}
    \dot{x} = f(x,t) - \Delta(x,t)^\top \theta + B(x,t) u,
    \label{eq:dyn}
\end{equation}
with state $x \in \mathbb{R}^n$, control input $u \in \mathbb{R}^m$, nominal dynamics $f: \mathbb{R}^n \times \mathbb{R} \rightarrow \mathbb{R}^{n}$, and control input matrix $B: \mathbb{R}^n \times \mathbb{R}  \rightarrow \mathbb{R}^{n\times m}$ with columns $b_i(x,t)$ for $i=1,\dots,m$. 
The uncertain dynamics are a linear combinations of known regression vectors $\Delta: \mathbb{R}^n \times \mathbb{R} \rightarrow \mathbb{R}^{p\times n}$ with rows $\varphi_i(x,t)$ for $i=1,\dots,p$ and unknown parameters $\theta \in \mathbb{R}^p$.
\changed{We assume the dynamics \cref{eq:dyn} are locally Lipschitz uniformly and the state $x$ is measurable.}
Systems with non-parametric or nonlinearly parameterized uncertainties can be converted into a linear weighting of handpicked or learned basis functions.
A a new adaptive control framework based on certainty equivalence is developed for nonlinear systems in the form of \cref{eq:dyn}.





\section{Universal Adaptive Control}

\subsection{Overview}
This section presents the main technical results of this letter. 
First, a new type of clf is defined and an equivalence to stabilizability is established.
The new clf is then used to adaptively stabilize \cref{eq:dyn} using the certainty equivalence principle and online adjustment of the adaptation rate. 
The result is then extended to contracting systems where a new control contraction metric is defined.

\subsection{Unmatched Control Lyapunov Functions}
\label{sub:uclf}

\begin{definition}
\label{def:uclf}
A smooth, positive-definite function $V_\theta : \mathbb{R}^n \times \mathbb{R}^p \times \mathbb{R} \rightarrow \mathbb{R}_+$ is an \emph{unmatched control Lyapunov function} (uclf) if it is radially unbounded in $x$ and for each $\theta \in \mathbb{R}^p$
\begin{equation*}
    \begin{aligned}
        \underset{u \in \mathbb{R}^m}{\mathrm{inf}} \left\{ \frac{\partial V_\theta}{\partial t} + \frac{\partial V_\theta}{\partial x} \left[ f - \Delta^\top \theta + B u \right] \right\} \leq  -Q_\theta(x,t)
    \end{aligned}
\end{equation*}
where $Q_\theta : \mathbb{R}^n \times \mathbb{R}^p \times \mathbb{R} \rightarrow \mathbb{R}_+$ is continuously differentiable, \changed{radially unbounded in $x$,} and positive-definite with $Q_\theta(x,t) = 0 \iff x \neq x_d$.
\end{definition}

\changed{

\begin{proposition}
\label{prop:stable}
The system \cref{eq:dyn} is stabilizable for each $\theta \in \mathbb{R}^p$ if and only if there exists an uclf.
\end{proposition}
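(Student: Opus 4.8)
\emph{Proof plan.} This is the parameterized analogue of the classical equivalence between existence of a control Lyapunov function and global asymptotic stabilizability — Artstein's theorem together with Sontag's universal formula in one direction and a converse Lyapunov theorem in the other. The plan is to argue pointwise in $\theta$: for a fixed $\theta$ the term $-\Delta(x,t)^\top\theta$ is a known, locally Lipschitz addition to the drift, so \cref{eq:dyn} reduces to an ordinary control-affine system, and the infimum inequality in \cref{def:uclf} says precisely that $V_\theta$ is a (time-varying) clf for that system with decrescent rate $Q_\theta$.

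For necessity, suppose \cref{eq:dyn} is stabilizable for a given $\theta$, i.e. some feedback $u=k_\theta(x,t)$ renders $x_d$ uniformly globally asymptotically stable for $\dot x = f - \Delta^\top\theta + Bk_\theta$. Using that the dynamics are locally Lipschitz uniformly in $t$ (the standing assumption on \cref{eq:dyn}), I would invoke a time-varying converse Lyapunov theorem (Massera/Kurzweil; see also Lin--Sontag--Wang) to obtain a smooth $V_\theta(x,t)$ that is positive definite and radially unbounded in $x$ about $x_d$, together with a continuous, positive-definite, radially-unbounded-in-$x$ $Q_\theta(x,t)$ satisfying $\partial_t V_\theta + \partial_x V_\theta[f - \Delta^\top\theta + Bk_\theta] \le -Q_\theta$ along the closed loop. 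Since the infimum over $u\in\mathbb{R}^m$ of the bracketed expression in \cref{def:uclf} is no larger than its value at $u=k_\theta(x,t)$, the uclf inequality holds; carrying this out for every $\theta$ produces the family $\{V_\theta\}$.

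For sufficiency, suppose a uclf exists and fix $\theta$. The defining inequality is exactly a clf condition for the control-affine system $\dot x = (f-\Delta^\top\theta) + Bu$, so by Artstein's theorem — concretely Sontag's universal formula applied to $a(x,t)=\partial_t V_\theta+\partial_x V_\theta(f-\Delta^\top\theta)+Q_\theta$ and $\beta(x,t)=(\partial_x V_\theta B)^\top$ — I would construct a feedback $u=k_\theta(x,t)$, smooth on $\{x\neq x_d\}$, for which $\partial_t V_\theta + \partial_x V_\theta[f-\Delta^\top\theta+Bk_\theta]\le -Q_\theta$; radial unboundedness of $V_\theta$ and $Q_\theta$ in $x$ then yields uniform global asymptotic stability of $x_d$, i.e. stabilizability for that $\theta$ (if continuity of $k_\theta$ at $x_d$ were also desired one would additionally impose the small-control property, but this is not needed for stabilizability).

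The one point I expect to be the real obstacle is the regularity of $\{V_\theta\}$ in the parameter: \cref{def:uclf} asks for a single function smooth in $(x,\theta,t)$, whereas a per-$\theta$ converse construction a priori only gives measurable dependence on $\theta$. I would handle this either by appealing to parametric/robust converse Lyapunov results that are uniform over compact $\theta$-sets and then patching with a smooth partition of unity, or — more simply — by noting that the certainty-equivalence controller developed in the sequel only ever evaluates the uclf along the current estimate $\hat\theta(t)$, so the equivalence is used pointwise in $\theta$ and the joint smoothness is a convenience rather than a necessity. Everything else is a direct transcription of the single-system clf theory.
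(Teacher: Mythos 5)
Your proposal takes essentially the same route as the paper's proof: Artstein/Sontag for the sufficiency direction and a converse Lyapunov argument applied pointwise in $\theta$ for necessity, with the paper simply citing these classical results in a few lines. The parameter-regularity issue you flag is genuine and is also left unaddressed in the paper's proof, which asserts a $\theta$-parameterized clf without justifying the joint smoothness in $(x,\theta,t)$ that \cref{def:uclf} demands; your proposed remedies (parametric converse results over compact $\theta$-sets patched by a partition of unity, or observing that the sequel only evaluates the uclf pointwise along $\hat{\theta}(t)$) are more careful than the paper's own treatment.
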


\begin{proof} $ $ \par
\noindent``$\Leftarrow$": Follows from \cite{artstein1983stabilization, sontag1989universal}.\\
``$\Rightarrow$": If \cref{eq:dyn} is stabilizable for each $\theta \in \mathbb{R}^p$, then there exists a controller $u$ and a clf $V_\theta(x,t)$ \cite{artstein1983stabilization} which, without loss of generality, satisfies $\dot{V}_\theta(x,t) \leq - Q_\theta(x,t) \leq 0$ where $Q_\theta(x,t)$ is a uniformly positive-definition function.
Then, by \cref{def:uclf} $V_\theta(x,t)$ must be an uclf.
\end{proof}
\cref{prop:stable} highlights the fundamental difference between an uclf and aclf (\cref{def:aclf}, see Appendix): existence of an uclf is equivalent to the stabilizability of \cref{eq:dyn} for each $\theta \in \mathbb{R}^p$.
This equivalence means known approaches that compute a clf can be trivially extended to construct an uclf by searching over $x$ and $\theta$.
Conversely, existence of an aclf is equivalent to \cref{eq:dyn} being \emph{adaptively stabilizable} \cite{krstic1995control}; a property not easily verified as it entails finding either 1) a stabilizing controller and adaptation law or 2) a stabilizing policy for a modified system that depends on its own (unknown) clf.
As shown next, uclf's play a central role in adaptive control of systems with unmatched uncertainties. }
\vspace{-0.2in}

\begin{theorem}
Consider an uncertain system of the form \cref{eq:dyn}.
If an uclf $V_\theta(x, t)$ exists, then, for any strictly-increasing and uniformly-positive scalar function $\upsilon(\rho)$, the closed-loop system is globally asymptotically stable with the unmatched adaptation law
\begin{subequations}
\label{eq:adapt_law_v}
    \begin{align}
        \dot{\hat{\theta}} & = - \upsilon(\rho) \Gamma \Delta(x,t) \frac{\partial V_{\hat{\theta}}}{\partial {x}}^\top, \label{eq:dot_theta_v} \\
        \dot{\rho} & = - \frac{\upsilon(\rho)}{\upsilon_\rho(\rho)} \sum_{i=1}^p\frac{1}{V_{\hat{\theta}}(x,t) + \eta} \frac{\partial V_{\hat{\theta}}}{\partial \hat{\theta}_i} \dot{\hat{\theta}}_i, \label{eq:dot_rho_V}
    \end{align}
\end{subequations}
where $\Gamma \in \mathcal{S}^p_+$, $\eta \in \mathbb{R}_{>0}$, and $ \upsilon_\rho(\rho):= {\partial \upsilon}/{\partial \rho}$.
\label{thm:univ_v}
\end{theorem}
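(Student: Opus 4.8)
The plan is to exhibit a composite Lyapunov function whose decrease is guaranteed by the uclf inequality, with the twist that the parameter error enters through the adapted estimate $\hat\theta$ and the rate-variable $\rho$ absorbs the residual cross term. I would begin with the candidate
\begin{equation*}
    \mathcal{V}(x,\hat\theta,\rho,t) = V_{\hat\theta}(x,t) + \tfrac{1}{2}(\hat\theta - \theta)^\top \Gamma^{-1} (\hat\theta - \theta) \cdot \text{(a factor built from $\upsilon$)},
\end{equation*}
or more precisely a function of the form $\log(V_{\hat\theta}+\eta)$ plus a quadratic-in-$\tilde\theta$ term, since the $1/(V_{\hat\theta}+\eta)$ weighting in \cref{eq:dot_rho_V} strongly suggests the natural Lyapunov object is $\ln(V_{\hat\theta}(x,t)+\eta)$ rather than $V_{\hat\theta}$ itself. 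The first step is therefore to differentiate $\ln(V_{\hat\theta}+\eta)$ along \cref{eq:dyn}, obtaining a term $\tfrac{1}{V_{\hat\theta}+\eta}\big(\partial_t V_{\hat\theta} + \partial_x V_{\hat\theta}[f-\Delta^\top\theta + Bu]\big)$ plus a term $\tfrac{1}{V_{\hat\theta}+\eta}\sum_i \partial_{\hat\theta_i}V_{\hat\theta}\,\dot{\hat\theta}_i$ coming from the explicit $\hat\theta$-dependence of $V_{\hat\theta}$.

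Next I would invoke \cref{def:uclf}: choosing $u$ to achieve the certainty-equivalent bound gives $\partial_t V_{\hat\theta} + \partial_x V_{\hat\theta}[f - \Delta^\top\hat\theta + Bu] \le -Q_{\hat\theta}(x,t)$, so the Lyapunov-rate of the first term becomes $\tfrac{1}{V_{\hat\theta}+\eta}\big(-Q_{\hat\theta} + \partial_x V_{\hat\theta}\Delta^\top(\hat\theta-\theta)\big)$ — the usual "mismatch" term $\partial_x V_{\hat\theta}\Delta^\top\tilde\theta$ that destroys certainty equivalence in the unmatched case. The third step is to show this mismatch term is exactly cancelled by the $\tilde\theta$-quadratic term of $\mathcal{V}$ under the adaptation law \cref{eq:dot_theta_v}: indeed $\dot{\hat\theta} = -\upsilon(\rho)\Gamma\Delta\,\partial_x V_{\hat\theta}^\top$ makes $\tilde\theta^\top\Gamma^{-1}\dot{\hat\theta} = -\upsilon(\rho)\tilde\theta^\top\Delta\,\partial_x V_{\hat\theta}^\top$, which matches the mismatch term up to the $\upsilon(\rho)$ factor and the $1/(V_{\hat\theta}+\eta)$ weight — this is precisely why the $\tilde\theta$-term must be scaled by something involving $\upsilon$ and why $\upsilon$ must be uniformly positive and strictly increasing (so that $\upsilon_\rho>0$ and the scaling is invertible/well-defined). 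The fourth step is the crux: the leftover obstruction is the term $\tfrac{1}{V_{\hat\theta}+\eta}\sum_i \partial_{\hat\theta_i}V_{\hat\theta}\,\dot{\hat\theta}_i$, the "estimation transient" term arising because $V_{\hat\theta}$ itself moves as $\hat\theta$ adapts. Here the $\rho$-dynamics \cref{eq:dot_rho_V} enter: $\dot\rho$ is defined precisely so that the contribution of $\rho$ to $\dot{\mathcal{V}}$ (via the $\upsilon(\rho)$-dependent scaling of the $\tilde\theta$-term, whose $\rho$-derivative produces a $\upsilon_\rho(\rho)\dot\rho$ factor) cancels this transient term. Concretely, $\dot\rho = -\tfrac{\upsilon(\rho)}{\upsilon_\rho(\rho)}\sum_i \tfrac{1}{V_{\hat\theta}+\eta}\partial_{\hat\theta_i}V_{\hat\theta}\,\dot{\hat\theta}_i$ is engineered to make the $\rho$-term and the transient term sum to zero, leaving $\dot{\mathcal{V}} \le -\tfrac{Q_{\hat\theta}(x,t)}{V_{\hat\theta}(x,t)+\eta} \le 0$.

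Finally I would close the argument with a LaSalle/Barbalat step: from $\dot{\mathcal{V}}\le 0$ conclude $\mathcal{V}$, hence $V_{\hat\theta}$, $\tilde\theta$, and $x$ are bounded; radial unboundedness of $V_{\hat\theta}$ in $x$ (and of $Q_{\hat\theta}$) gives boundedness of trajectories; then uniform continuity of $\dot{\mathcal{V}}$ (using local Lipschitzness from the problem statement and the established bounds) lets Barbalat's lemma force $Q_{\hat\theta}(x,t)/(V_{\hat\theta}(x,t)+\eta)\to 0$, and since $Q_{\hat\theta}(x,t)=0$ only at $x=x_d$, we get $x\to x_d$, i.e. global asymptotic stability of the tracking error. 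The main obstacle I anticipate is the fourth step — verifying that the particular algebraic form of \cref{eq:dot_rho_V}, with the $\upsilon/\upsilon_\rho$ factor, exactly produces the cancellation once one accounts for how $\rho$ enters the chosen $\mathcal{V}$; getting the scaling of the $\tilde\theta$-term right (it should be something like $\tfrac{1}{2\,\upsilon(\rho)}\tilde\theta^\top\Gamma^{-1}\tilde\theta$ or similar, so that $\partial_\rho$ of it yields the needed $\upsilon_\rho/\upsilon^2$ structure) is the delicate bookkeeping, and a secondary subtlety is ensuring $\rho$ stays in the domain where $\upsilon$ is defined and $\upsilon_\rho$ is bounded away from zero so that $\dot\rho$ is well-posed for all time.
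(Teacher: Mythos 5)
Your overall strategy is correct---a composite Lyapunov function, certainty-equivalence applied with $\hat\theta$ to isolate the mismatch term $\partial_x V_{\hat\theta}\,\Delta^\top\tilde\theta$, cancel that with $\tilde\theta^\top\Gamma^{-1}\dot{\hat\theta}$, cancel the estimation-transient term $\sum_i\partial_{\hat\theta_i}V_{\hat\theta}\,\dot{\hat\theta}_i$ with the $\rho$-dynamics, and close with Barbalat. That is exactly the skeleton of the paper's proof. But your specific choice of Lyapunov function does not produce the cancellations, and this is a genuine gap, not just bookkeeping.

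Concretely, you place the $\upsilon$-dependent scaling on the parameter-error term: either $\ln(V_{\hat\theta}+\eta) + \tfrac{1}{2}g(\upsilon)\,\tilde\theta^\top\Gamma^{-1}\tilde\theta$ or, in your unlogged version, $V_{\hat\theta} + \tfrac{1}{2\upsilon(\rho)}\tilde\theta^\top\Gamma^{-1}\tilde\theta$. Neither works. In the log version, the mismatch term arrives weighted by $1/(V_{\hat\theta}+\eta)$ while the adaptation-law contribution $g(\upsilon)\tilde\theta^\top\Gamma^{-1}\dot{\hat\theta} = -g(\upsilon)\upsilon(\rho)\,\partial_x V_{\hat\theta}\Delta^\top\tilde\theta$ has no $x$-dependent weight; cancellation would force $g(\upsilon)\upsilon(\rho) = 1/(V_{\hat\theta}(x,t)+\eta)$ pointwise, which is impossible since $\rho$ and $(x,\hat\theta)$ evolve independently. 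In the unlogged version, the mismatch term does cancel, but the $\dot\rho$-derivative contribution is $-\tfrac{\upsilon_\rho\dot\rho}{2\upsilon^2}\tilde\theta^\top\Gamma^{-1}\tilde\theta$, so cancelling the transient term would require $\dot\rho$ to depend on the unknown $\tilde\theta^\top\Gamma^{-1}\tilde\theta$---an unimplementable adaptation law.

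The missing idea is to put $\upsilon(\rho)$ on the \emph{state-Lyapunov} side, not the parameter-error side: the paper uses $V_c = \upsilon(\rho)\bigl(V_{\hat\theta}(x,t)+\eta\bigr) + \tfrac{1}{2}\tilde\theta^\top\Gamma^{-1}\tilde\theta$, with the $\tilde\theta$-quadratic unscaled. Then the mismatch term in $\dot V_c$ appears as $\upsilon(\rho)\,\partial_x V_{\hat\theta}\Delta^\top\tilde\theta$, which is cancelled exactly by $\tilde\theta^\top\Gamma^{-1}\dot{\hat\theta} = -\upsilon(\rho)\,\partial_x V_{\hat\theta}\Delta^\top\tilde\theta$; and the $\rho$-derivative contribution is $\dot\rho\,\upsilon_\rho(\rho)(V_{\hat\theta}+\eta)$, whose coefficient $(V_{\hat\theta}+\eta)$ is \emph{computable}, so \cref{eq:dot_rho_V} can be implemented and yields $\dot\rho\,\upsilon_\rho(V_{\hat\theta}+\eta) = -\upsilon(\rho)\sum_i\partial_{\hat\theta_i}V_{\hat\theta}\dot{\hat\theta}_i$. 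This leaves $\dot V_c \leq -\upsilon(\rho) Q_{\hat\theta}(x,t)$. The $1/(V_{\hat\theta}+\eta)$ factor in \cref{eq:dot_rho_V} is the result of \emph{dividing out} the $(V_{\hat\theta}+\eta)$ coefficient that $\dot\rho$ enters with, not a signature of a logarithmic Lyapunov function. One more minor point: before invoking Barbalat you must argue that $\upsilon(\rho)$ and $V_{\hat\theta}$ are \emph{individually} bounded, not just their product; the paper does this by observing $\dot{\hat\theta} = 0$ (hence $\dot\rho = 0$) at $x = x_d$, so $\upsilon(\rho)$ stays bounded, and then bounds $V_{\hat\theta}$ from the boundedness of the product.
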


\begin{proof}
Consider the Lyapunov-like function
\begin{equation*}
    V_c(t) = \upsilon(\rho) (V_{\hat{\theta}}(x,t)+\eta) + \tfrac{1}{2} \tilde{\theta}^\top \Gamma^{-1} \tilde{\theta},
\end{equation*}
where \changed{$0 < \eta < \infty$}, and $\tilde{\theta}:= \hat{\theta} - \theta$.
Differentiating $V_c(t)$,
\begin{equation*}
    \begin{aligned}
        \dot{V}_c(t) & = \upsilon(\rho) \dot{V}_{\hat{\theta}}(x,t) +  \dot{\rho} \upsilon_{\rho}(\rho) (V_{\hat{\theta}}(x,t)+\eta) + \tilde{\theta}^\top \Gamma^{-1} \dot{\hat{\theta}}  \\
        & =  \upsilon(\rho) \left\{ \frac{\partial V}{\partial t} + \frac{\partial V_{\hat{\theta}}}{\partial x} \left[ f - \Delta^\top \hat{\theta} + Bu \right] \right\}  \\
        & \hphantom{=} + \upsilon(\rho) \frac{\partial V_{\hat{\theta}}}{\partial x} \Delta(x,t)^\top \tilde{\theta} + \upsilon(\rho) \sum_{i=1}^p \frac{\partial V_{\hat{\theta}}}{\partial \hat{\theta}_i} \dot{\hat{\theta}}_i \\
        & \hphantom{=} +  \dot{\rho} \upsilon_{\rho}(\rho) (V_{\hat{\theta}}(x,t)+\eta) + \tilde{\theta}^\top \Gamma^{-1} \dot{\hat{\theta}},
    \end{aligned}
\end{equation*}
where $\upsilon_{\rho}(\rho)= \partial \upsilon / \partial \rho$. 
Employing the certainty equivalence property in conjunction with \cref{def:uclf} and~\eqref{eq:adapt_law_v} yields $\, \dot{V}_c(t) \leq - \upsilon(\rho) Q_{\hat{\theta}}(x,t) \leq 0 \,$
which implies that both $\, \upsilon(\rho) (V_{\hat{\theta}}(x,t)+\eta) \, $ and $\, \tilde{\theta} \,$ are bounded.
Since $\, V_{\hat{\theta}}(x,t) > 0 \, $ for all $\, x \neq x_d \, $ and $\, \upsilon(\rho) > 0$ uniformly, then both $\, V_{\hat{\theta}}(x,t)\, $ and $\, \upsilon(\rho)\, $ are bounded for all $\, x \neq x_d$.
\changed{Since $\, \smash{\dot{\hat{\theta}}} = 0\, $ when $x = x_d$, then from \cref{eq:dot_rho_V} $\, \dot{\rho} = 0\, $ so $\, \upsilon(\rho)\, $ remains bounded.
Hence, $\, V_{\hat{\theta}}(x,t)\, $ is bounded because $\, \eta$, $\, \upsilon(\rho)$, and $\, \upsilon(\rho) (V_{\hat{\theta}}(x,t) + \eta) \, $ are bounded}.
Differentiating $\, \upsilon(\rho) Q_{\hat{\theta}}(x,t) \,$ and utilizing \cref{eq:dot_rho_V}, 
\begin{equation*}
    \begin{aligned}
        \frac{d}{dt} \left(\upsilon(\rho) Q_{\hat{\theta}}(x,t)\right)  
        = &  ~ \upsilon(\rho) \sum_i^p \left[ \frac{\partial Q_{\hat{\theta}}}{\partial \hat{\theta}_i} - \frac{Q_{\hat{\theta}}(x,t)}{V_{\hat{\theta}}(x,t) + \eta} \frac{\partial V_{\hat{\theta}}}{\partial \hat{\theta}_i}\right] \dot{\hat{\theta}}_i \\
        & + \upsilon(\rho)  \frac{\partial Q_{\hat{\theta}}}{\partial x} \dot{x} + \upsilon(\rho) \frac{\partial Q}{\partial t}, \\
    \end{aligned}
\end{equation*}
which is bounded by continuity of $\, V_{\hat{\theta}}(x,t),~ Q_{\hat{\theta}}(x,t)\,$ and boundedness of $x$ and $\upsilon(\rho)$.
Hence, $\upsilon(\rho) Q_{\hat{\theta}}(x,t) \, $ is uniformly continuous.
Integrating $\dot{V}_c(t)$ yields $\int \limits_0^\infty \upsilon( \rho(\tau) ) Q_{\hat{\theta}}(x(\tau),\tau) d \tau \leq V_c(0) < \infty$, 
so by Barbalat's lemma~\cite{slotine1991applied} $\, \upsilon (\rho) Q_{\hat{\theta}}(x,t) \rightarrow 0$.
Since $\, \upsilon (\rho) > 0 \,$ uniformly and $\, Q_{\hat{\theta}}(x,t) = 0 \iff x = x_d \, $  then $ \, x \rightarrow x_d \, $ as $t\rightarrow +\infty$.
\end{proof}

\begin{remark} 
\cref{thm:univ_v} immediately extends to the case when $\sqrt{Q_\theta(x,t)}$ is an input to a virtual contracting system~\cite{lohmiller1998contraction,wang2005partial} with $x$ and $x_d$ as particular solutions, rather than satisfying $Q_\theta(x,t) = 0 \Longleftrightarrow x = x_d$.
As in~\cite{slotine2003modular}, this follows from the hierarchical combination property of contracting systems, which generalizes the notion of a sliding variable~\cite{slotine1991applied}.
\end{remark}

\begin{remark}
If the unknown parameters belong to a closed convex set $\Theta$, i.e., $\theta \in \Theta \subset \mathbb{R}^p$, then one can employ the projection operator $\mathrm{Proj}_\Theta (\cdot)$ to ensure $\hat{\theta} \in \Theta$ without affecting stability \cite{slotine1991applied,ioannou2012robust}.
\end{remark}

\begin{remark}
If an uclf is designed to use full desired trajectory for feedback, then an \emph{adaptive reference model} -- where $(x_d,\,u_d)$ is re-computed for every new parameter estimate $\hat{\theta}$ -- is required for closed-loop stability.
The origin of adaptive reference models will be discussed further in \cref{sub:uccm}. 
\end{remark}

\cref{thm:univ_v} shows how the certainty equivalence property can be extended to systems with unmatched uncertainties by combining uclf with an \emph{effective} adaptation gain $\upsilon(\rho)\Gamma$ that adjusts in response to whether the parameter adaptation transients terms are stabilizing or destabilizing in $\dot{V}_c(t)$.
Inspecting \cref{eq:dot_rho_V}, if the parameter adaptation transients is destabilizing  then $\upsilon(\rho)\Gamma$ decreases thereby slowing the rate of parameter adaptation; the opposite occurs when the adaptation transients is stabilizing.
Note the effective adaptation gain can be kept constant in this scenario, i.e., set $\dot{\rho} = 0$, without affecting stability.
This is advantageous given the well-known negative effects of high-rate adaptation.


\changed{Sufficiency for \cref{thm:univ_v} requires the existence of an uclf, which is guaranteed if \cref{eq:dyn} is stabilizable for each $\theta$ (\cref{prop:stable}).}
Practically, methods like backstepping, feedback linearization, or sum-of-squares (SOS) optimization can be utilized to compute $V_\theta(x,t)$ analytically while more recent data-driven approaches \cite{giesl2020approximation,boffi2020learning} can also be used.
For linear systems one can solve a parameter-dependent algebraic Ricatti equation.
The above approaches find an explicit state transformation $z_\theta = T_\theta(x)$ where $V_\theta(x) = \tfrac{1}{2}z_\theta(x)^\top z_\theta(x)$ is a suitable uclf.
\cref{sub:uccm} will show how the unmatched adaptation law \cref{eq:adapt_law_v} can be combined with contraction theory to instead use a differential transformation $\delta_{z} = \mathcal{T}_\theta(x,t)\delta_x$ yielding a more general result.

\subsection{Unmatched Control Contraction Metrics}
\label{sub:uccm}

Contraction analysis~\cite{lohmiller1998contraction} uses differential geometry to construct stabilizing feedback controllers without constructing an explicit state transformation.
This is achieved by deriving a differential controller $\delta_u$ for the nominal differential dynamics of \cref{eq:dyn} given by $\dot{\delta}_x = A(x,u,t) \delta_x + B(x,t) \delta_u$ where $A(x,u,t):= \partial f / \partial x + \sum^m_i \partial b_i / \partial x \, u_i$.
Convex constructive conditions can be formulated for the so-called control contraction metric~\cite{manchester2017control}  $\ M : \mathbb{R}^n \times \mathbb{R} \rightarrow \mathcal{S}^n_+$ that ensures the distance defined by the metric $M$ between any two points, such as the current and desired state, converges exponentially with rate $\lambda$.
More precisely, for a smooth manifold $\mathcal{M}$ and geodesic $\gamma: \left[0~1\right] \times \mathbb{R} \rightarrow \mathcal{M}$ with boundary conditions $\gamma(0,t) = x_d(t)$ and $\gamma(1,t) = x(t)$, the Riemannian energy $E(x,t) := \int_0^1 \gamma_s(s,t) M(\gamma,t) \gamma_s(s,t) ds$ where $\gamma_s(s,t) := \partial \gamma / \partial s$ satisfies $\dot{E}(x,t) \leq - 2\lambda E(x,t)$ yielding $x \rightarrow x_d$ exponentially with rate $\lambda$. 
In order to leverage the versatility of contraction, a new contraction metric suitable for adaptive control with unmatched uncertainties is required.

\begin{definition}
\label{def:uccm}
A uniformly bounded Riemannian metric $M_\theta : \mathbb{R}^n \times \mathbb{R}^p \times \mathbb{R} \rightarrow \mathcal{S}^n_+$ is an \emph{unmatched control contraction metric} (uccm) if, for each $\theta \in \mathbb{R}^p$, the dual metric $W_\theta(x,t) := M_\theta(x,t)^{-1}$ satisfies
\begin{align}
    & B_\perp^\top \left( W_\theta A_\theta^\top + A_\theta W_\theta - \dot{W}_\theta + 2 \lambda W_\theta \right) B_\perp  \preceq 0 \tag{C1} \label{eq:ccm_c1} \\
    & \partial_{b_i}W_\theta - W_\theta \frac{\partial b_i}{\partial x}^\top  -  \frac{\partial b_i}{\partial x} W_\theta = 0,~~ i=1,\dots,m \tag{C2} \label{eq:ccm_c2}
\end{align}
where $A_\theta(x,u,t):= \partial f / \partial x - \sum^p_i \partial \varphi_i / \partial x \, \theta_i + \sum^m_i \partial b_i / \partial x \, u_i$, $B_\perp(x,t)$ is the annihilator matrix of $B(x,t)$, i.e., $B_\perp^\top B = 0$, and $\dot{W}_\theta := \partial W_\theta / \partial t + \partial_{\dot{x}} W_\theta$.
\end{definition}

\vspace{-0.05in}
\changed{
    \begin{remark}
        The existence of uccm is equivalency to \cref{eq:dyn} be contracting for each $\theta \in \mathbb{R}^p$; a criteria easily included in numerical methods that search for ccm's.
        The proof can be found in \nameref{sec:appendix}.
    \end{remark}
} \vspace{-0.05in}

\begin{remark}
Since the proposed approach employs the the certainty equivalence principle, $\dot{W}_\theta$ does not include a parameter adaptation term which makes computing uccm tractable.
\end{remark}

\begin{remark}
If the uncertainty is matched then the metric does not need to depend on the unknown parameters \cite{lopez2020adaptive}.
\end{remark}

An important property of uccm's is summarized in \cref{lemma:uccm}; it will be later used in \cref{thm:univ_ccm}.

\begin{lemma}
\label{lemma:uccm}
If an uccm $M_\theta(x,t)$ exists, then the Riemannian energy satisfies $\dot{E}_\theta (x,t) \leq - 2 \lambda E_\theta(x,t)$ for each $\theta \in \mathbb{R}^p$.
\end{lemma}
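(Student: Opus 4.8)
The plan is to reduce the claim, for each fixed $\theta$, to the classical control contraction metric (ccm) theorem of~\cite{manchester2017control}. Fix $\theta \in \mathbb{R}^p$ throughout. Since $\theta$ is a constant vector, \cref{eq:dyn} with this $\theta$ is a known control-affine system $\dot x = f_\theta(x,t) + B(x,t)u$, $f_\theta := f - \Delta^\top\theta$, whose differential dynamics along a path read $\dot{\delta}_x = A_\theta(x,u,t)\delta_x + B(x,t)\delta_u$ with $A_\theta$ as in \cref{def:uccm}; moreover $M_\theta(\cdot,t)$ is an ordinary, uniformly bounded, time-varying Riemannian metric, and because $\theta$ does not evolve, $\dot W_\theta = \partial W_\theta/\partial t + \partial_{\dot x} W_\theta$ contains \emph{no} parameter-adaptation term. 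Hence \cref{eq:ccm_c1}--\cref{eq:ccm_c2} are precisely the ccm conditions for this system at rate $\lambda$, and the lemma follows from the standard ccm argument applied to it.

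That argument has two steps. First, I would build a differential feedback law: \cref{eq:ccm_c1} ensures the metric already contracts at rate $2\lambda$ in the unactuated directions spanned by $B_\perp$, while the strong (Killing-type) equality~\cref{eq:ccm_c2} ensures the complementary, actuated directions can be handled by a $\delta_u$ that is affine in $\delta_x$ and, crucially, that this differential law is integrable along paths; together they produce $\frac{d}{dt}\!\left(\delta_x^\top M_\theta\delta_x\right) \le -2\lambda\,\delta_x^\top M_\theta\delta_x$ pointwise along any solution of the differential closed loop. Second, I would evaluate this inequality with $\delta_x = \gamma_s(s,t)$ along the minimizing geodesic $\gamma(\cdot,t)$ from $x_d(t)$ to $x(t)$, and define the actual state feedback by integrating the differential law along $\gamma$, so that both $x$ and $x_d$ are genuine closed-loop trajectories.

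Finally, I would differentiate $E_\theta(x,t) = \int_0^1 \gamma_s^\top M_\theta(\gamma,t)\gamma_s\,ds$ in time. Because $\gamma(\cdot,t)$ minimizes the energy, the first variation kills the contributions of the $s$-reparameterization and of the interior deformation of the path, leaving the endpoint terms driven by $\dot\gamma(0,t)=\dot x_d$, $\dot\gamma(1,t)=\dot x$ together with $\int_0^1 \tfrac{d}{dt}(\gamma_s^\top M_\theta\gamma_s)\,ds$; inserting the pointwise contraction estimate then yields $\dot E_\theta(x,t) \le -2\lambda E_\theta(x,t)$.

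I expect the real work to be the classical part: justifying the interchange of $d/dt$ with the geodesic minimization (the first-variation/Gauss-lemma step, which requires sufficient regularity of the geodesics and of $M_\theta$) and constructing the integrable differential controller from~\cref{eq:ccm_c1}--\cref{eq:ccm_c2}. The only ingredient genuinely specific to the \emph{unmatched} setting is the remark above that freezing $\theta$ makes $\dot W_\theta$ coincide with the quantity in \cref{def:uccm}, with no adaptation term, so that the ccm theorem applies verbatim for each $\theta$.
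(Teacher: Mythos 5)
Your proof is correct and follows essentially the same route as the paper's (very terse) argument: freezing $\theta$ turns \cref{eq:ccm_c1}--\cref{eq:ccm_c2} into the standard ccm conditions of~\cite{manchester2017control}, yielding the pointwise differential inequality $\frac{d}{dt}\!\left(\delta_x^\top M_\theta\delta_x\right) \le -2\lambda\,\delta_x^\top M_\theta\delta_x$, which is then integrated along the minimizing geodesic to obtain $\dot E_\theta \le -2\lambda E_\theta$. You simply make explicit the classical ccm machinery (construction of the integrable differential controller, first-variation/geodesic regularity) that the paper's two-line proof leaves implicit.
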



\begin{theorem}
\label{thm:univ_ccm}
Consider an uncertain system of the form \cref{eq:dyn}.
If an uccm $M_\theta(x,t)$ exists, then, for any strictly-increasing and uniformly-positive scalar function $\upsilon(\rho)$, the closed-loop system is globally asymptotically stable with the unmatched adaptation law
\begin{subequations}
\label{eq:adapt_ccm}
    \begin{align}
        \dot{\hat{\theta}} & = - \upsilon(\rho) \Gamma \Delta(x,t) M_{\hat{\theta}}(x,t)\gamma_s(1,t), \label{eq:dot_theta_ccm} \\
        \dot{\rho} & = - \frac{\upsilon(\rho)}{\upsilon_\rho(\rho)} \sum_{i=1}^p\frac{1}{E_{\hat{\theta}}(x,t) + \eta} \frac{\partial E_{\hat{\theta}}}{\partial \hat{\theta}_i} \dot{\hat{\theta}}_i, \label{eq:dot_rho_ccm}
    \end{align}
\end{subequations}
where $\Gamma \in \mathcal{S}^p_+$, $\gamma_s := {\partial \gamma}/{\partial s}$ is the geodesic speed, $\eta \in \mathbb{R}_{>0}$, and $ \upsilon_\rho(\rho):= {\partial \upsilon}/{\partial \rho}$.
\end{theorem}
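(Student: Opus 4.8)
The plan is to mirror the proof of \cref{thm:univ_v}, with the parameter-dependent Riemannian energy $E_{\hat{\theta}}(x,t)$ playing the role of the uclf $V_{\hat{\theta}}(x,t)$ and the contraction bound of \cref{lemma:uccm} --- namely $\dot{E}_\theta(x,t) \leq -2\lambda E_\theta(x,t)$ for each frozen $\theta$, which already bakes in the differential certainty-equivalence controller $\delta_u$ associated with the uccm --- playing the role of the defining inequality of \cref{def:uclf}. Concretely, I would take the composite Lyapunov-like function
\begin{equation*}
    V_c(t) = \upsilon(\rho)\,\bigl(E_{\hat{\theta}}(x,t)+\eta\bigr) + \tfrac{1}{2}\,\tilde{\theta}^\top \Gamma^{-1} \tilde{\theta},
\end{equation*}
with $\tilde{\theta} := \hat{\theta} - \theta$ and $0 < \eta < \infty$, exactly as in \cref{thm:univ_v}, and differentiate it along \cref{eq:dyn} and \cref{eq:adapt_ccm}, giving $\dot{V}_c = \upsilon(\rho)\dot{E}_{\hat{\theta}} + \dot{\rho}\,\upsilon_\rho(\rho)(E_{\hat{\theta}}+\eta) + \tilde{\theta}^\top \Gamma^{-1}\dot{\hat{\theta}}$.

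The geometric ingredient I would invoke is the first variation of the Riemannian energy: since the minimizing geodesic $\gamma$ is a critical point of the energy functional with respect to interior variations, only boundary variations contribute, so $\partial E_{\hat{\theta}}/\partial x$ is proportional to $\gamma_s(1,t)^\top M_{\hat{\theta}}(x,t)$ --- this is precisely why $M_{\hat{\theta}}(x,t)\gamma_s(1,t)$ appears in \cref{eq:dot_theta_ccm}, mirroring $(\partial V_{\hat{\theta}}/\partial x)^\top$ in \cref{eq:dot_theta_v} --- together with the analogous expression for $\partial E_{\hat{\theta}}/\partial \hat{\theta}_i$. I would then decompose $\dot{E}_{\hat{\theta}}$ along the closed loop into three pieces: (i) the certainty-equivalence part, obtained by writing the true drift as $-\Delta^\top\theta = -\Delta^\top\hat{\theta} + \Delta^\top\tilde{\theta}$ and collecting everything that would appear if the parameter were $\hat{\theta}$, which by \cref{def:uccm} and \cref{lemma:uccm} is bounded by $-2\lambda E_{\hat{\theta}}(x,t)$; (ii) the unmatched mismatch term, proportional to $\upsilon(\rho)\,\gamma_s(1,t)^\top M_{\hat{\theta}}(x,t)\Delta(x,t)^\top\tilde{\theta}$, coming from the residual $\Delta^\top\tilde{\theta}$; and (iii) the metric-dependence term $\upsilon(\rho)\sum_i (\partial E_{\hat{\theta}}/\partial \hat{\theta}_i)\dot{\hat{\theta}}_i$, which is nonzero because $M_{\hat{\theta}}$ varies with the estimate. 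The adaptation law \cref{eq:dot_theta_ccm} is chosen so that $\tilde{\theta}^\top\Gamma^{-1}\dot{\hat{\theta}}$ cancels (ii), and the rate law \cref{eq:dot_rho_ccm} is chosen so that $\dot{\rho}\,\upsilon_\rho(\rho)(E_{\hat{\theta}}+\eta)$ cancels (iii); what survives is $\dot{V}_c \leq -2\lambda\,\upsilon(\rho)\,E_{\hat{\theta}}(x,t) \leq 0$.

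From here the argument is the same as in \cref{thm:univ_v}. Non-positivity of $\dot{V}_c$ gives boundedness of $\upsilon(\rho)(E_{\hat{\theta}}+\eta)$ and of $\tilde{\theta}$; since $\eta > 0$ and $\upsilon(\rho)$ is uniformly positive, $E_{\hat{\theta}}(x,t)$ and $\upsilon(\rho)$ are bounded whenever $x \neq x_d$, and when $x = x_d$ the geodesic degenerates so $\gamma_s(1,t) = 0$, hence $\dot{\hat{\theta}} = 0$ and, by \cref{eq:dot_rho_ccm}, $\dot{\rho} = 0$, so $\upsilon(\rho)$ cannot escape. Differentiating $\upsilon(\rho)E_{\hat{\theta}}(x,t)$ and using smoothness and uniform boundedness of $M_{\hat{\theta}}$ together with boundedness of $x$, $\hat{\theta}$, and $\upsilon(\rho)$ shows it is uniformly continuous, and integrating $\dot{V}_c$ gives $\int_0^\infty 2\lambda\,\upsilon(\rho(\tau))\,E_{\hat{\theta}}(x(\tau),\tau)\,d\tau \leq V_c(0) < \infty$, so Barbalat's lemma \cite{slotine1991applied} yields $\upsilon(\rho)E_{\hat{\theta}}(x,t) \to 0$, hence $E_{\hat{\theta}}(x,t)\to 0$. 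Since $M_{\hat{\theta}}$ is uniformly bounded below by a positive-definite matrix, $E_{\hat{\theta}}(x,t) \geq c\,\|x - x_d\|^2$ for some $c > 0$ (and $E_{\hat{\theta}}$ is radially unbounded in $x$), so $x \to x_d$ as $t \to +\infty$ and the closed loop is globally asymptotically stable.

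The step I expect to be the main obstacle is making piece (i) rigorous. One must justify that freezing $\hat{\theta}$ at each instant and invoking \cref{lemma:uccm} with $\theta = \hat{\theta}$ is legitimate --- i.e., that the contribution of $\dot{\hat{\theta}}$ through the $\hat{\theta}$-dependence of $M_{\hat{\theta}}$ is exactly the separately accounted term (iii) and is not silently absorbed into the contraction estimate, and that the explicit time dependence of $M_{\hat{\theta}}$ and the motion of the desired state are carried correctly through the first-variation computation; this is where the adaptive reference model of the earlier remark enters if $(x_d,u_d)$ is recomputed for each new $\hat{\theta}$. A secondary technical point is smooth dependence of the geodesic $\gamma$ on the endpoint $x$ and on $\hat{\theta}$, needed for the identities for $\partial E_{\hat{\theta}}/\partial x$ and $\partial E_{\hat{\theta}}/\partial \hat{\theta}_i$; this is inherited from smoothness and uniform boundedness of $M_\theta$ (geodesic well-posedness as in \cite{manchester2017control}). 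The normalization bookkeeping between the contraction rate $2\lambda$, the constant in the first-variation identity, and the gain $\Gamma$ is routine but easy to get wrong.
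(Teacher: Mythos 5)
Your proposal is correct and follows essentially the same route as the paper's proof: the same composite Lyapunov function $V_c$, the same three-way decomposition of $\dot{E}_{\hat{\theta}}$ (certainty-equivalence contraction bound from \cref{lemma:uccm}, mismatch term canceled by \cref{eq:dot_theta_ccm}, metric-dependence term canceled by \cref{eq:dot_rho_ccm}), and the same boundedness/uniform-continuity/Barbalat endgame, including the observation that $\gamma_s(1,t)=0$ when $x=x_d$ so $\upsilon(\rho)$ stays bounded at the target. The only cosmetic difference is that you carry the contraction rate as $-2\lambda\upsilon(\rho)E_{\hat{\theta}}$ where the paper writes $-\lambda\upsilon(\rho)E_{\hat{\theta}}$; your constant is the more natural one given $\dot{E}_\theta\le-2\lambda E_\theta$, and it does not affect the argument.
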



\cref{thm:univ_ccm} shows how the results of \cref{sub:uclf} can be extended to contracting systems where only a differential transform $\delta_z = \mathcal{T}_\theta(x,t)\delta_x$ is possible. 
As mentioned previously, the origin of adaptive reference models, i.e., where the desired trajectory $(x_d,\,u_d)$ are re-computed with $\hat{\theta}$, is more obvious under the lens of contraction.
Taking the first variation of the Riemannian energy with \cref{eq:dyn} yields (time dependency omitted in the dynamics)
\begin{equation*}
\begin{aligned}
        \tfrac{1}{2} \dot{E}_{\hat{\theta}}&(x,t)  = \gamma_s(1,t)^\top M_{\hat{\theta}}(x,t) \underbrace{\left[ f(x) - \Delta(x)^\top \hat{\theta} + B(x)u \right]}_\textrm{Estimated System} \nonumber \\
    & - \gamma_s(0,t)^\top M_{\hat{\theta}}(x_d,t) \underbrace{\left[ f({x}_d) - \Delta(x_d)^\top \hat{\theta} + B(x_d)u_d \right]}_\textrm{Adaptive Reference Model} \nonumber \\
    & + \gamma_s(1,t)^\top M_{\hat{\theta}}(x,t)\Delta(x,t)^\top \tilde{\theta} + \tfrac{1}{2}\sum_{i=1}^p \frac{\partial E_{\hat{\theta}}}{\partial \hat{\theta}_i} \dot{\hat{\theta}}_i + \tfrac{1}{2} \frac{\partial E}{\partial t}\\
\end{aligned}
\end{equation*}
where the second term on the right-hand side must depend on the parameter estimate $\hat{\theta}$ in order to leverage \cref{lemma:uccm} and prove \cref{thm:univ_ccm}.
Although the reference model is changing, the system will still exhibit the desired behavior as at least one state can also be arbitrarily specified.
Note \cref{thm:univ_ccm} can be directly used to extend the related formalism of \cite{tsukamoto2021learning} and \cite{richards2021adaptive} to unmatched uncertainties. 



\begin{figure}[t!]
\centering 
  \subfloat[Norm of state $x$.]{\includegraphics[width=.48\columnwidth]{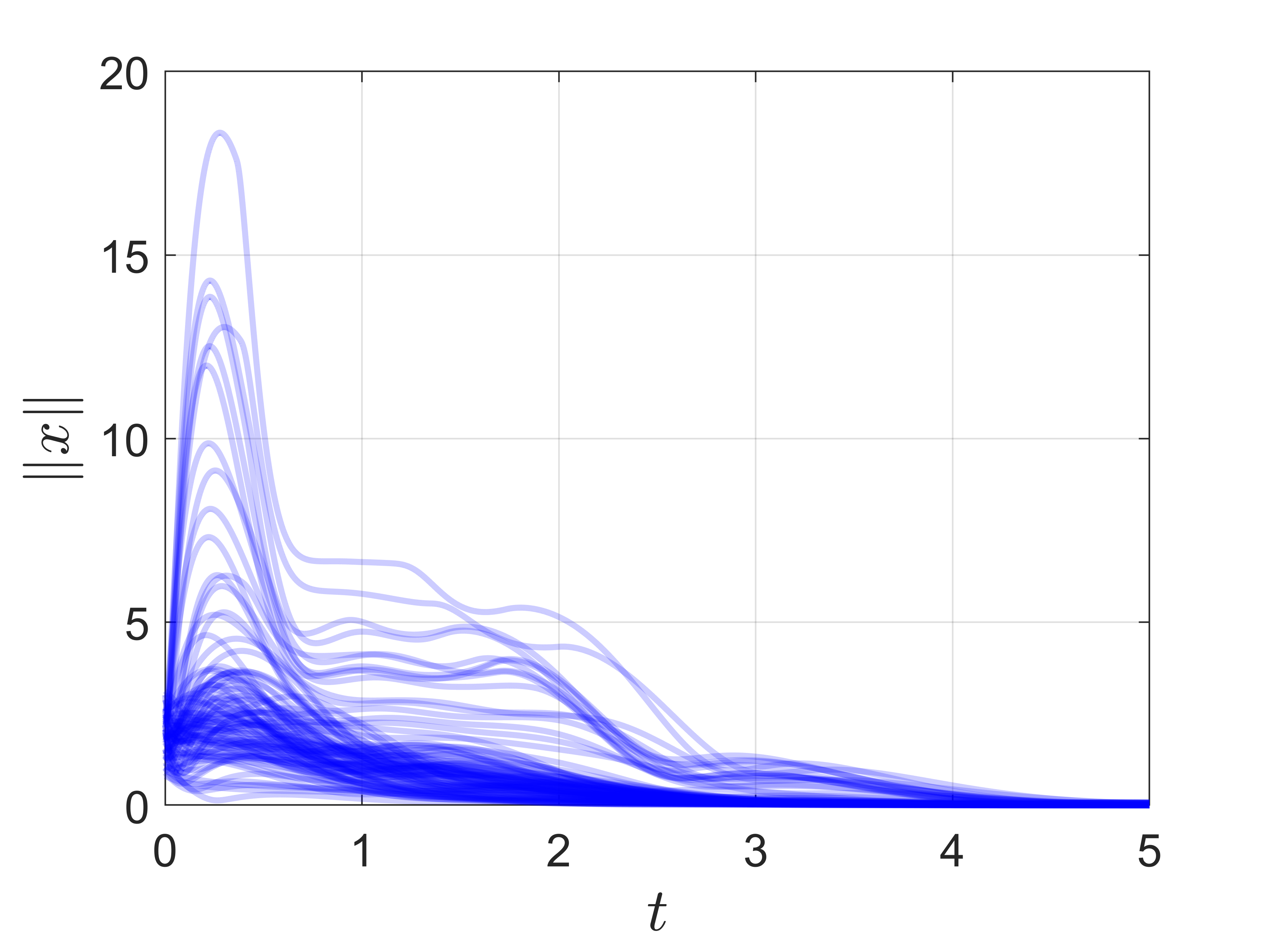}\label{fig:x_ex1}}
  \hspace{0.3em}
  \subfloat[Scaling function $\upsilon(\rho)$.]{\includegraphics[width=.48\columnwidth]{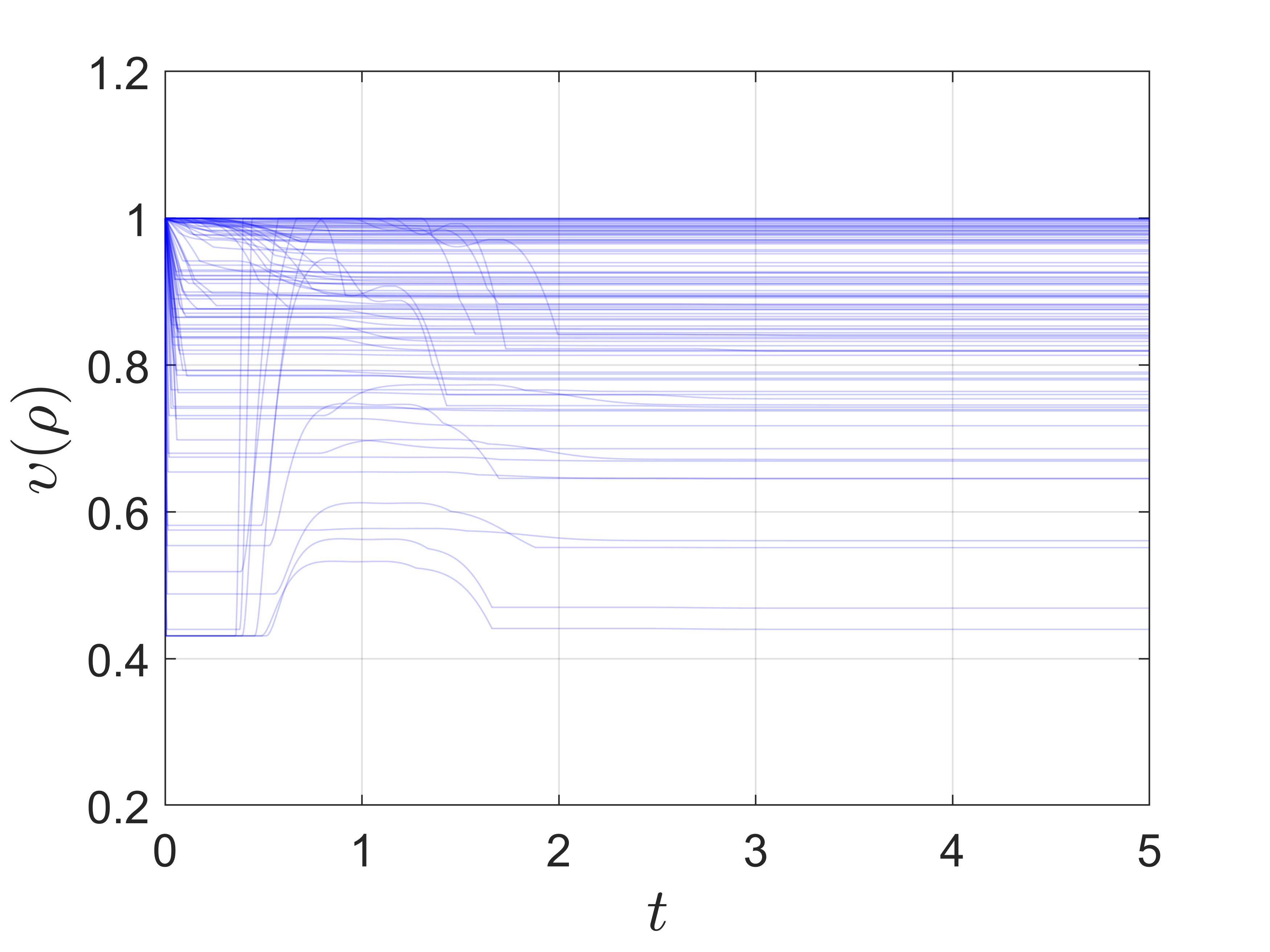}\label{fig:v_ex1}}
  \caption{Norm of state $x$ and the scaling function $\upsilon(\rho)$ for 100 different simulations where $x_0$ and $\theta^*$ are sampled uniformly from bounded sets. (a): Every simulation instance converged to the origin despite the unmatched model uncertainty. (b): Scaling function adjusting to achieve closed-loop stability where $\upsilon(\rho) = 1$ corresponds to no change in the adaptation rate. } 
    \label{fig:ex1}
    \vskip -0.2in
\end{figure}

\section{Numerical Simulations}
\label{sec:results}

\subsection{Example 1: Strict-Feedback System}
Consider the following system with state $x = [x_1~x_2~x_3]^\top$, unknown parameters $\theta = [\theta_1~\theta_2]^\top$, and dynamics
\begin{equation}
\label{eq:ex_dyn_1}
\left[ \begin{array}{c} \dot{x}_1 \\ \dot{x}_2 \\ \dot{x}_3 \end{array} \right] = \left[ \begin{array}{c}  - \theta_1 \mathrm{sin}(x_1) - \theta_2 x_1^2 + x_2 \\ x_3  \\ 0 \end{array} \right] + \left[ \begin{array}{c} 0 \\ 0 \\ 1 \end{array} \right] u,
\end{equation}
which are in strict feedback form with unmatched uncertainties.
\changed{Note that \cref{eq:ex_dyn_1} is a reasonable benchmark since triangular systems are commonly found in the nonlinear adaptive controls literature.
The developed approach is applicable to a much broader class of systems as shown in \cref{prop:stable}.}
The goal is to stabilize the origin where the initial state $x_0$ and true parameters $\theta^*$ are sampled uniformly from bounded sets, i.e., $x_0 \in [-2,\,2]^3$ and $\theta^* \in [-0.2,\,0.4] \times [0.2,\,0.6]$.
The uclf $V_\theta(x) = \tfrac{1}{2}z_\theta(x)^\top z_\theta(x)$ was derived via backstepping\footnote{\changed{The expression for $z_\theta(x)$ and $u_\theta(x)$ can be found in \nameref{sec:appendix}.}} \cite{krstic1995nonlinear} as if $\theta$ was known yielding $\dot{V}_\theta(x) \leq - 4 V_\theta(x)$.
The scaling function was chosen to be $\upsilon(\rho) = 0.9\,e^{\nicefrac{\rho}{5}} + 0.1$ while other parameters were $\Gamma = 0.1I_2$, $\eta = 100$, and simulation time step $dt = 0.005$.

One hundred simulation experiments were conducted with different initial conditions $x_0$ and model parameters $\theta^*$ to evaluate the performance of the proposed adaptive controller.
\cref{fig:x_ex1} shows the norm of the state vector $x$ converges to the origin for each trial as desired; 11\% of the trials without adaptation failed resulting in the closed-loop system diverging.
The scaling function $\upsilon(\rho)$ for each trial is seen in \cref{fig:v_ex1} and shows the adaptation rate adjusting in the majority of trials to maintain stability.  

\subsection{Example 2: Contracting System}

Consider the following system with state $x = [x_1~x_2~x_3]^\top$, unknown parameters $\theta = [\theta_1~\theta_2~\theta_3~\theta_4]^\top$, and dynamics
\begin{equation}
\label{eq:ex_dyn_2}
\left[ \begin{array}{c} \dot{x}_1 \\ \dot{x}_2 \\ \dot{x}_3 \end{array} \right] = \left[ \begin{array}{c}  x_3 - \theta_1 x_1 \\ - x_2 -\theta_2 x^2_1  \\ \mathrm{tanh}(x_2) - \theta_3 x_3 - \theta_4 x_1^2 \end{array} \right] + \left[ \begin{array}{c} 0 \\ 0 \\ 1 \end{array} \right] u.
\end{equation}
The system is not feedback linearizable (controllability matrix drops rank at the origin) and is not in strict feedback form.
Note that $\theta_1$ and $\theta_2$ are unmatched parameters.
The goal is to track a desired trajectory ($x_d,~u_d$) generated with parameters $\theta_{d} = [\hat{\theta}_1\,\hat{\theta}_2\,0\,0]^\top$ driven by a reference $x_{1d}(t) = \mathrm{sin}(t)$.
The true model parameters were $\theta^* = [-0.3\,-0.8\,-0.25\,-0.75]^\top$.
The set of allowable parameter variations was ${\theta} \in [-0.4,~0.5] \times [-1,~0.6] \times [-0.6,~0.75] \times [-1.75,~0.4]$; each parameter was bounded by the projection operator from \cite{slotine1991applied}.
The scaling function was chosen to be $\upsilon(2\rho) = 0.9\, e^{\nicefrac{\rho}{5}} + 0.1$ while $\Gamma = 5 I_4$, $\eta = 0.1$, and simulation time step $dt = 0.01$.
A dual uccm was computed using YALMIP and SOS \cite{Lofberg2004}; it was non-flat meaning a quadratic clf does not exist.
Geodesics and the pointwise min-norm controller \cite{primbs2000receding} were computed at each time step.

\cref{fig:ex2} compares the performance of the proposed controller to a controller that utilizes the same contraction metric but does not adapt to the unmatched parameters, i.e., the reference mode is fixed.
The Riemannian energy -- an indicator of the closed-loop tracking error -- with and without an adaptive reference model is shown in \cref{fig:E_ex2}.
The tracking performance of the proposed approach is superior to that of the controller with a static reference model demonstrating the importance and effectiveness of adapting the reference model with the current parameter estimates.
Moreover, \cref{fig:x_ex2} confirms that the system will still exhibit the desired behavior (blue curve) despite the reference model changing. 


\begin{figure}[t!]
\centering 
  \subfloat[Riemannian energy $E$ with and without an adapting reference model.]{\includegraphics[trim=120 0 120 0, clip, width=.48\columnwidth]{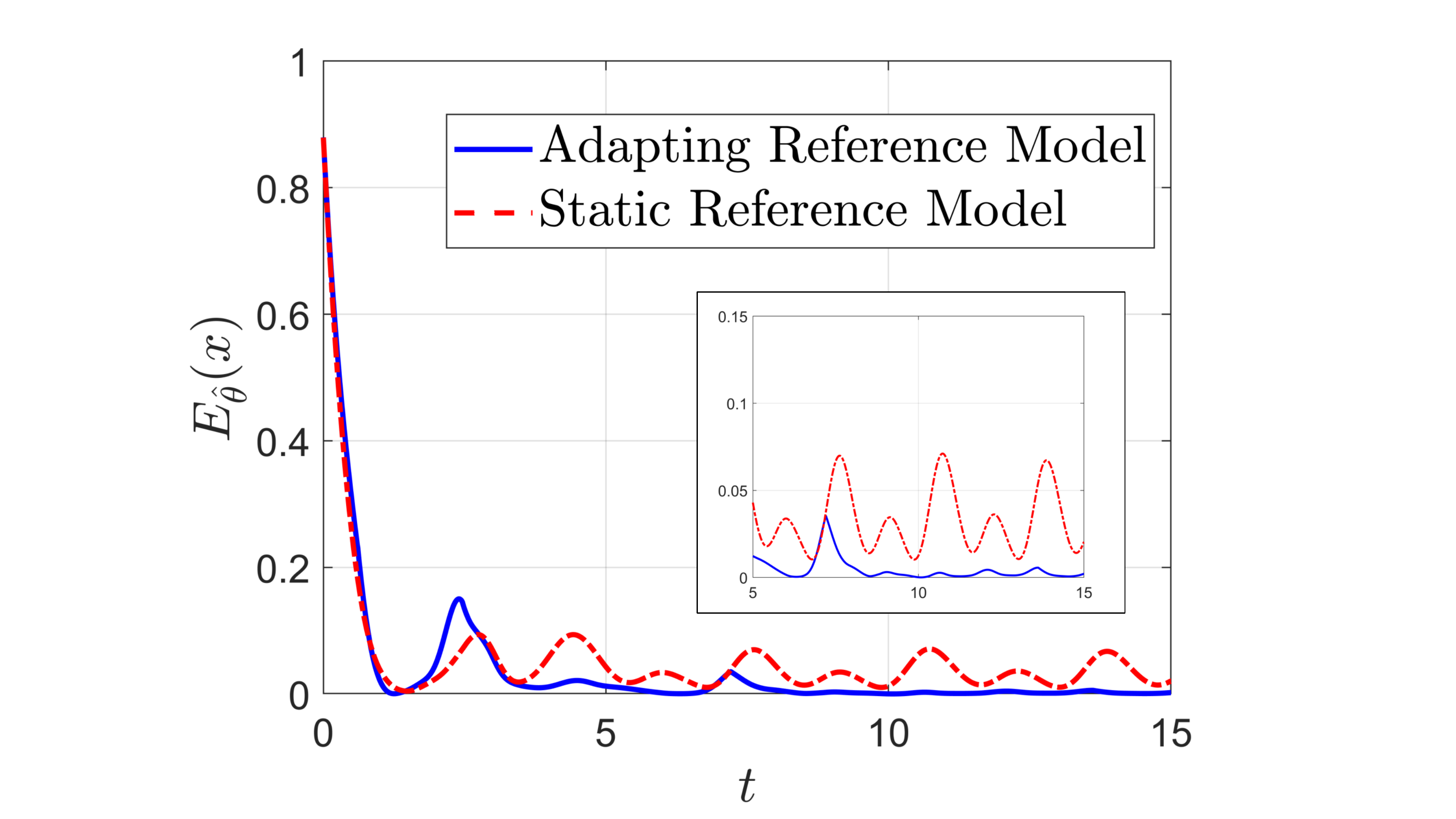}\label{fig:E_ex2}}
  \hspace{0.3em}
  \subfloat[State $x$ with an adapting reference model.]{\includegraphics[width=.48\columnwidth]{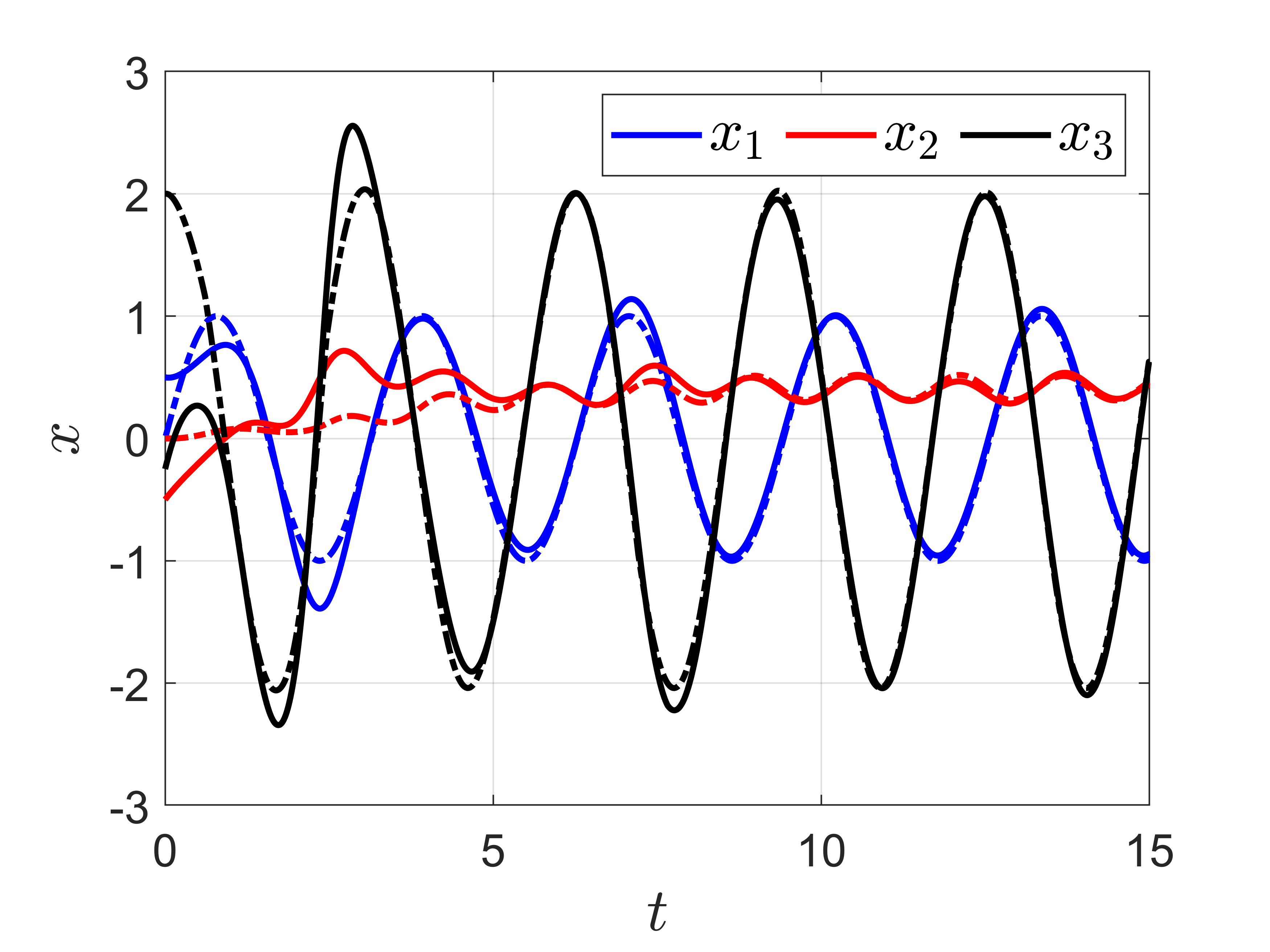}\label{fig:x_ex2}}
  \caption{Controller comparison with and without an adapting reference model. (a): Riemannian energy of the proposed approach is lower than that of a controller with a static reference model. (b): Closed-loop system exhibits desired behavior (blue) despite the reference model adapting online.} 
    \label{fig:ex2}
    \vskip -0.15in
\end{figure}


\section{Concluding Remarks}
This work developed a new direct adaptive control technique that eliminates the effects of parameter estimation transients on stability through online adjustment of the adaptation rate. 
The key advantage of the method is the ability to leverage the certainty equivalence principle in synthesizing stabilizing controllers via Lyapunov or contraction theory for systems with unmatched uncertainties.
Several extensions are of interest, including combining the unmatched adaptation law with data-driven controllers for high-dimensional nonlinear systems, e.g., robotic manipulation or mobility; expanding the notion of adaptive safety \cite{lopez2020robust} via certainty equivalence design of barrier functions; and the output feedback with adaptive nonlinear observers.
The proposed framework may also be particularly useful in the field of machine learning for dynamical systems, where standard practice is to learn a single feedback policy for all possible model realizations.
With the results presented here, the certainty equivalence principle can now be employed to learn a \emph{family} of optimal policies that then can be later combined with online adaptation to achieve stability despite the presence of uncertainty.
\changed{Finally, further investigation into the connection between adaptive stabilizability \cite{krstic1995control} and stabilizability for each $\theta$ is of interest.}
\section{Appendix}
\label{sec:appendix}
\subsection{Adaptive Control Lyapunov Functions}
\begin{definition}[cf.~\cite{krstic1995control}]
\label{def:aclf}
A smooth, positive-definite function $V_a : \mathbb{R}^n \times \mathbb{R}^p \times \mathbb{R} \rightarrow \mathbb{R}_+$ is an \emph{adaptive control Lyapunov function} (aclf) if it is radially unbounded in $x$ and for each $\theta \in \mathbb{R}^p$ satisfies
\begin{equation*}
    \begin{aligned}
        \underset{u \in \mathbb{R}^m}{\mathrm{inf}} & \left\{ \frac{\partial V_a}{\partial t} + \frac{\partial V_a}{\partial x} \left[ f - \Delta^\top \left( \theta - \Gamma \frac{\partial V_a}{\partial \theta}^\top  \right) + B u \right] \right\} \leq 0.
    \end{aligned}
\end{equation*}
\end{definition}

\subsection{Universal Adaptive Control with uccm}
\begin{proposition}
\label{prop:contract}
The system \cref{eq:dyn} is contracting for each $\theta \in \mathbb{R}^p$ if and only if there exists an uccm.
\end{proposition}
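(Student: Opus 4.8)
The plan is to mirror the proof of \cref{prop:stable}, substituting the control-contraction-metric machinery of~\cite{manchester2017control} for the Artstein--Sontag results used there. Since the statement is an ``if and only if'', I would split it into the two implications and dispatch each by applying the standard CCM constructions pointwise in $\theta$, exactly as \cref{prop:stable} applies the clf constructions pointwise in $\theta$.

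For the ``$\Leftarrow$'' direction, suppose a uccm $M_\theta(x,t)$ exists and fix an arbitrary $\theta \in \mathbb{R}^p$. First I would invoke \cref{lemma:uccm} to obtain $\dot{E}_\theta(x,t) \le -2\lambda E_\theta(x,t)$, i.e. the squared Riemannian distance between $x$ and $x_d$ contracts exponentially under the differential controller $\delta_u$ witnessing (C1). The one remaining point is that this $\delta_u$ can be integrated into an honest state feedback $u = k_\theta(x,t)$: this is precisely where condition (C2) (the input vector fields $b_i$ being Killing for $M_\theta$) enters, via the path-integral construction of~\cite{manchester2017control}. With that feedback in place the closed loop contracts with respect to $M_\theta$ for the given $\theta$, and since $\theta$ was arbitrary, \cref{eq:dyn} is contracting for each $\theta$.

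For the ``$\Rightarrow$'' direction, assume \cref{eq:dyn} is contracting for each $\theta$; then for each $\theta$ there is a feedback $u = k_\theta(x,t)$ and a uniformly bounded metric $M_\theta(x,t)$ such that the closed-loop differential dynamics $\dot\delta_x = (A_\theta + B\,\partial k_\theta/\partial x)\,\delta_x$ satisfy, in dual coordinates, $-\dot W_\theta + (A_\theta + B\,\partial k_\theta/\partial x)W_\theta + W_\theta(A_\theta + B\,\partial k_\theta/\partial x)^\top + 2\lambda W_\theta \preceq 0$. Performing the congruence transformation by the annihilator $B_\perp$ kills every term containing $B\,\partial k_\theta/\partial x$ because $B_\perp^\top B = 0$, leaving exactly (C1). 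It then remains to argue that the metric may, without loss of generality, be taken to also satisfy (C2); I would do this as in~\cite{manchester2017control}, using the constant-rank assumption on $B$ to pass to coordinates in which $B$ is constant, so that (C2) reduces to the metric being independent of the input coordinates, and then removing the residual dependence without spoiling (C1).

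I expect the last step --- upgrading an arbitrary closed-loop contraction metric to one that also satisfies the Killing condition (C2) --- to be the main obstacle, since (C1) falls out of a routine $B_\perp$-congruence but (C2) is a genuine structural restriction rather than a free normalization. The cleanest route is to lean on the corresponding argument in~\cite{manchester2017control} rather than reprove it, just as \cref{prop:stable} leans on~\cite{artstein1983stabilization,sontag1989universal}; the remarks following \cref{def:uccm} (certainty equivalence removing any adaptation term from $\dot W_\theta$) ensure nothing extra is needed to make this transfer legitimate in the parameter-dependent setting.
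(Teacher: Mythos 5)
Your proposal takes essentially the same route as the paper: both directions are handled by applying the Manchester--Slotine control contraction metric machinery pointwise in $\theta$, with the ``$\Leftarrow$'' direction reading contraction off (C1)--(C2) and the ``$\Rightarrow$'' direction recovering (C1) via a $B_\perp$-congruence and deferring the Killing normalization (C2) to~\cite{manchester2017control}. The paper's proof is terser --- phrasing everything through the implication $\delta_x^\top M_\theta B = 0 \Rightarrow \dot{\delta V}_\theta \leq -2\lambda \delta V_\theta$ rather than invoking \cref{lemma:uccm} --- but your writing out the dual LMI and flagging (C2) as the nontrivial step merely spells out what the paper's citation to~\cite{manchester2017control} leaves implicit.
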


\begin{proof} $ $ \par
\noindent``$\Leftarrow$": Let $\delta V_\theta = \delta_x^\top M_\theta(x,t) \delta_x$ with uccm $M_\theta(x,t)$. Using \cref{def:uccm}, one can show the implication $\delta_x^\top M_\theta B = 0 \Rightarrow \dot{\delta V}_\theta \leq - 2 \lambda \delta V_\theta$ is true for each $\theta \in \mathbb{R}^p$ with uccm $M_\theta(x,t)$. Therefore, \cref{eq:dyn} is contracting for each $\theta \in \mathbb{R}^p$. \\
``$\Rightarrow$": If \cref{eq:dyn} is contracting for each $\theta \in \mathbb{R}^p$, then there exists a ccm $M_\theta(x,t)$ such that the implication $\delta_x^\top M_\theta B = 0 \Rightarrow \dot{\delta V}_\theta \leq - 2 \lambda \delta V_\theta$ is true for $\delta V_\theta = \delta_x^\top M_\theta(x,t) \delta_x$ and each $\theta \in \mathbb{R}^p$.
Using the constructive conditions from \cite{manchester2017control}, one arrives to those in \cref{def:uccm}. 
Hence, $M_\theta(x,t)$ is an uccm.

\end{proof}

\begin{proof}[Proof of \cref{lemma:uccm}]
Since an uccm exists then $\, \frac{d}{dt}\left( \delta_x^\top M_\theta(x,t) \delta_x \right) \leq - 2 \lambda \delta_x^\top M_\theta(x,t) \delta_x$.
Integrating along a geodesic $\, \gamma(s,t) \,$ yields $\,\dot{E}_\theta(x,t) \leq - 2 \lambda E_\theta(x,t)$.
\end{proof}

\begin{proof}[Proof of \cref{thm:univ_ccm}]
Proof largely follows that of \cref{thm:univ_v} where $V_{\hat{\theta}}(x,t) = \int_0^1 \gamma_s(s,t)^\top M_{\hat{\theta}}(\gamma,t)\gamma_s(s,t) ds$ and $Q_{\hat{\theta}}(x,t) = 2 \lambda E_{\hat{\theta}}(x,t)$.
Using~\cref{lemma:uccm} and \cref{eq:dot_theta_ccm,eq:dot_rho_ccm}, it can be shown that $\dot{V}_c(t) \leq - \lambda \upsilon(\rho)  E_{\hat{\theta}}(x,t)$ 
which implies that both $\, \upsilon(\rho)( E_{\hat{\theta}}(x,t) + \eta) \,$ and $\, \tilde{\theta} \, $ are bounded.
Using identical arguments to those in \cref{thm:univ_v}, \changed{since $\eta$, $\upsilon(\rho)$, and $\, \upsilon(\rho)( E_{\hat{\theta}}(x,t) + \eta) \,$ are bounded then $E_{\hat{\theta}}(x,t)$ is bounded}.
Differentiating the right side of $\dot{V}_c(t)$,

\begin{equation*}
    \begin{aligned}
        \frac{d}{dt} ( \upsilon(\rho) E_{\hat{\theta}}(&x,t)) = ~ 2 \upsilon(\rho) \gamma_s(s,t)M_{\hat{\theta}}(\gamma(s,t),t) \dot{\gamma}(s,t) \Big\rvert^{s=1}_{s=0} \\
        & ~ + \upsilon(\rho) \sum_{i=1}^p \left[ 1 - \frac{E_{\hat{\theta}}(x,t)}{E_{\hat{\theta}}(x,t) + \eta} \right] \frac{\partial E}{\partial \hat{\theta}_i} \dot{\hat{\theta}}_i + \frac{\partial E_{\hat{\theta}}}{\partial t}
    \end{aligned}
\end{equation*}
which is bounded. 
First note that $E_{\hat{\theta}}(x,t),~\tilde{\theta},$ and $\upsilon(\rho)$ are all bounded.
Since $E_{\hat{\theta}}(x,t)$ is bounded then $x$ is also bounded for bounded $x_d$.
Additionally, $\gamma_s(s,t)$ is bounded because geodesics have constant speed so $E_{\hat{\theta}}(x,t) = \langle \gamma_s,\gamma_s\rangle_{M_{\hat{\theta}}}$ and since $E_{\hat{\theta}}(x,t)$ is bounded and $M_{\hat{\theta}}(x,t) \succ 0$ then $\gamma_s(s,t)$ must also be bounded for all $s\in[0~1]$ and $t$.
All terms in \cref{eq:dot_theta_ccm} are bounded so $\smash{\dot{\hat{\small{\theta}}}}$ is also bounded. 
By smoothness ${\partial E_{\hat{\theta}} / \partial \hat{\theta}_i}$ is bounded.
Hence $\, \upsilon(\rho) E_{\hat{\theta}}(x,t) \, $ is uniformly continuous.
Integrating $\dot{V}_c(t)$ yields $\int \limits_0^\infty \upsilon( \rho(\tau) ) E_{\hat{\theta}}(x(\tau),\tau) d \tau \leq V_c(0) \leq \infty$
so $\, \upsilon (2\rho) E_{\hat{\theta}} (x,t) \rightarrow 0$ by Barbalat's lemma~\cite{slotine1991applied}.
Since $\, \upsilon (\rho) > 0 \, $ uniformly and $\, E_{\hat{\theta}}(x,t) = 0 \iff x = x_d\, $ then $ \, x \rightarrow x_d \, $ as $t\rightarrow +\infty$. 
\end{proof}

\subsection{Example 1: Backstepping Controller Derivation}
The state transformation $z_\theta(x)$ for \cref{eq:ex_dyn_1} was computed using the standard backstepping technique, which yields
\begin{equation*}
\begin{aligned}
    z_{\theta,1}(x) & = x_1 \\
    z_{\theta,2}(x) & = x_2 + 2 x_1 - \theta_2 x_1^2 - \theta_1 \mathrm{sin}(x_1) \\
    z_{\theta,3}(x) & = x_1 + x_3 + 2(-\theta_2 x_2^2 + x_2 + x_1 - \theta_1 \mathrm{sin}(x_1)) \\
                    & + (\theta_2 x_1^2 - x_2 + \theta_1 \mathrm{sin}(x_1))(2\theta_2 x_1 - 2 + \theta_1\mathrm{cos}(x_1)).
\end{aligned}
\end{equation*}

\noindent With the controller \cref{eq:u_back}, the uclf $V_\theta(x) = \tfrac{1}{2}z_\theta(x)^\top z_\theta(x)$ satisfies $\dot{V}_\theta(x) \leq - 4 V_\theta(x)$ for each $\theta \in \mathbb{R}^2$.

\begin{table*}[t!]
\centering
\begin{minipage}{\textwidth}
\hrule
\begin{align}
 u_\theta(x) = & \,\,(\theta_2x_1^2 - x_2 + \theta_1 \mathrm{sin}(x_1))\left[ (2\theta_2 x_1 + \theta_1 \mathrm{cos}(x_1)) (2 \theta_2 x_1 - 2 + \theta_1 \mathrm{cos}(x_1)) + (2 \theta_2 - \theta_1 \mathrm{sin}(x_1)) \right. \nonumber \\
 & \left. (\theta_2 x_1^2 - x_2 + \theta_1 \mathrm{sin}(x_1)) - 2 (2 \theta_2 x1 - 2 + \theta_1 \mathrm{cos}(x_1)) - 1) \right]  - 2 \left[x_1 + x_3 + 2 (- \theta_2 x_2^2 + x_2 + 2 x_1 - \theta_1 \mathrm{sin}(x_1)) \right. \nonumber \\
 & + \left. (\theta_2 x_1^2 - x_2 + \theta_1 \mathrm{sin}(x_1)) (2 \theta_2 x_1 - 2 + \theta_1 \mathrm{cos}(x_1))\right] - x_2 - 2 x_1 + \theta_2 x_1^2 + \theta_1 \mathrm{sin}(x_1) \nonumber \\
 & + x_3 (2 \theta_2 x_1 - 4 + \theta_1 \mathrm{cos}(x_1)) \label{eq:u_back}
\end{align}
\medskip
\hrule
\end{minipage}
\end{table*}



\noindent \textbf{Acknowledgements:} The authors thank Sumeet Singh for his specific suggestions on an early draft of the manuscript.

\bibliographystyle{ieeetr}
\bibliography{ref}

\newcommand{\noopsort}[1]{} \newcommand{\printfirst}[2]{#1}
  \newcommand{\singleletter}[1]{#1} \newcommand{\switchargs}[2]{#2#1}
\begin{thebibliography}{10}

\bibitem{slotine1986adaptive}
J.-J.~E. Slotine and J.~Coetsee, ``Adaptive sliding controller synthesis for
  non-linear systems,'' {\em International Journal of Control}, vol.~43, no.~6,
  pp.~1631--1651, 1986.

\bibitem{slotine-li1987adaptive}
J.-J.~E. Slotine and W.~Li, ``On the adaptive control of robot manipulators,''
  {\em International Journal of Robotics Research}, vol.~6, no.~3, 1987.

\bibitem{taylor1989adaptive}
D.~Taylor, P.~Kokotovic, R.~Marino, and I.~Kanellakopoulos, ``Adaptive
  regulation of nonlinear systems with unmodeled dynamics,'' {\em IEEE
  Transactions on automatic control}, vol.~34, no.~4, pp.~405--412, 1989.

\bibitem{isidori1989}
S.~Sastry and A.~Isidori, ``Adaptive control of linearizable systems,'' {\em
  IEEE Transactions on automatic control}, vol.~34, no.~11, 1989.

\bibitem{kanellakopoulos1991extended}
I.~Kanellakopoulos, P.~V. Kokotovic, and R.~Marino, ``An extended direct scheme
  for robust adaptive nonlinear control,'' {\em Automatica}, vol.~27, no.~2,
  pp.~247--255, 1991.

\bibitem{tsinias1991}
J.~Tsinias, ``A theorem on global stabilization of nonlinear systems by linear
  feedback,'' {\em Systems \& control letters}, vol.~17, no.~5, 1989.

\bibitem{kanellakopoulos1991systematic}
I.~Kanellakopoulos, P.~Kokotovic, and A.~Morse, ``Systematic design of adaptive
  controllers for feedback linearizable systems,'' {\em IEEE Transactions on
  Automatic Control}, vol.~36, no.~11, pp.~1241--1253, 1991.

\bibitem{krstic1992adaptive}
M.~Krsti{\'c}, I.~Kanellakopoulos, and P.~Kokotovi{\'c}, ``Adaptive nonlinear
  control without overparametrization,'' {\em Systems \& Control Letters},
  vol.~19, no.~3, pp.~177--185, 1992.

\bibitem{tsinias2000backstepping}
J.~Tsinias, ``Backstepping design for time-varying nonlinear systems with
  unknown parameters,'' {\em Systems \& Control Letters}, vol.~39, no.~4,
  pp.~219--227, 2000.

\bibitem{krstic1995nonlinear}
M.~Krstic, P.~V. Kokotovic, and I.~Kanellakopoulos, {\em Nonlinear and adaptive
  control design}.
\newblock John Wiley \& Sons, 1995.

\bibitem{boffi2021higher}
N.~M. Boffi and J.-J.~E. Slotine, ``Implicit regularization and momentum
  algorithms in nonlinearly parameterized adaptive control and prediction,''
  {\em Neural Computation}, vol.~33, no.~3, pp.~590--673, 2021.

\bibitem{krstic1995control}
M.~Krsti{\'c} and P.~V. Kokotovi{\'c}, ``Control lyapunov functions for
  adaptive nonlinear stabilization,'' {\em Systems \& Control Letters},
  vol.~26, no.~1, pp.~17--23, 1995.

\bibitem{boffi2021regret}
N.~M. Boffi, S.~Tu, and J.-J.~E. Slotine, ``Regret bounds for adaptive
  nonlinear control,'' in {\em Learning for Dynamics and Control},
  pp.~471--483, PMLR, 2021.

\bibitem{sanner1991gaussian}
R.~M. Sanner and J.-J.~E. Slotine, ``Gaussian networks for direct adaptive
  control,'' in {\em 1991 American control conference}, pp.~2153--2159, IEEE,
  1991.

\bibitem{boffi2021random}
N.~M. Boffi, S.~Tu, and J.-J.~E. Slotine, ``Random features for adaptive
  nonlinear control and prediction,'' {\em arXiv:2106.03589}, 2021.

\bibitem{lopez2020adaptive}
B.~T. Lopez and J.-J.~E. Slotine, ``Adaptive nonlinear control with contraction
  metrics,'' {\em IEEE Control Systems Letters}, vol.~5, no.~1, pp.~205--210,
  2020.

\bibitem{lohmiller1998contraction}
W.~Lohmiller and J.-J.~E. Slotine, ``On contraction analysis for non-linear
  systems,'' {\em Automatica}, vol.~34, no.~6, pp.~683--696, 1998.

\bibitem{manchester2017control}
I.~R. Manchester and J.-J.~E. Slotine, ``Control contraction metrics: Convex
  and intrinsic criteria for nonlinear feedback design,'' {\em IEEE
  Transactions on Automatic Control}, vol.~62, no.~6, pp.~3046--3053, 2017.

\bibitem{artstein1983stabilization}
Z.~Artstein, ``Stabilization with relaxed controls,'' {\em Nonlinear Analysis:
  Theory, Methods \& Applications}, vol.~7, no.~11, pp.~1163--1173, 1983.

\bibitem{sontag1989universal}
E.~D. Sontag, ``A ‘universal’construction of artstein's theorem on
  nonlinear stabilization,'' {\em Systems \& control letters}, vol.~13, no.~2,
  pp.~117--123, 1989.

\bibitem{slotine1991applied}
J.-J.~E. Slotine and W.~Li, {\em Applied nonlinear control}.
\newblock Prentice Hall, 1991.

\bibitem{wang2005partial}
W.~Wang and J.-J.~E. Slotine, ``On partial contraction analysis for coupled
  nonlinear oscillators,'' {\em Biological cybernetics}, vol.~92, no.~1,
  pp.~38--53, 2005.

\bibitem{slotine2003modular}
J.-J.~E. Slotine, ``Modular stability tools for distributed computation and
  control,'' {\em International Journal of Adaptive Control and Signal
  Processing}, vol.~17, no.~6, pp.~397--416, 2003.

\bibitem{ioannou2012robust}
P.~A. Ioannou and J.~Sun, {\em Robust adaptive control}.
\newblock Courier Corporation, 2012.

\bibitem{giesl2020approximation}
P.~Giesl, B.~Hamzi, M.~Rasmussen, and K.~Webster, ``Approximation of lyapunov
  functions from noisy data.,'' {\em Journal of Computational Dynamics},
  vol.~7, no.~1, 2020.

\bibitem{boffi2020learning}
N.~M. Boffi, S.~Tu, N.~Matni, J.-J.~E. Slotine, and V.~Sindhwani, ``Learning
  stability certificates from data,'' {\em CoRL}, 2020.

\bibitem{tsukamoto2021learning}
H.~Tsukamoto, S.-J. Chung, and J.-J. Slotine, ``Learning-based adaptive control
  via contraction theory,'' {\em arXiv:2103.02987}, 2021.

\bibitem{richards2021adaptive}
S.~Richards, N.~Azizan, J.-J. Slotine, and M.~Pavone,
  ``Adaptive-control-oriented meta-learning for nonlinear systems,'' in {\em
  Robotics science and systems}, 2021.

\bibitem{Lofberg2004}
J.~L{\"{o}}fberg, ``Yalmip : A toolbox for modeling and optimization in
  matlab,'' in {\em In Proceedings of the CACSD Conference}, (Taipei, Taiwan),
  2004.

\bibitem{primbs2000receding}
J.~A. Primbs, V.~Nevistic, and J.~C. Doyle, ``A receding horizon generalization
  of pointwise min-norm controllers,'' {\em IEEE Transactions on Automatic
  Control}, vol.~45, no.~5, pp.~898--909, 2000.

\bibitem{lopez2020robust}
B.~T. Lopez, J.-J.~E. Slotine, and J.~P. How, ``Robust adaptive control barrier
  functions: An adaptive and data-driven approach to safety,'' {\em IEEE
  Control Systems Letters}, vol.~5, no.~3, pp.~1031--1036, 2020.

\end{thebibliography}

\end{document}